\newenvironment{proof}{{\indent  \indent \it Proof:}}{\hfill $\blacksquare$}
\begin{document}
	
\title{UAV Trajectory and Beamforming Optimization for Integrated Periodic Sensing and Communication}

\author{
	Kaitao Meng, Qingqing Wu, Shaodan Ma, Wen Chen, and  Tony Q. S. Quek, \textit{Fellow, IEEE}
	\thanks{{K. Meng, Q. Wu, and S. Ma are with the State Key Laboratory of Internet of Things for Smart City, University of Macau, Macau, 999078, China (e-mails: \{kaitaomeng, qingqingwu, shaodanma\}@um.edu.mo). W. Chen is with the Department of Electronic Engineering, Shanghai Jiao Tong University, Shanghai 201210, China (e-mail: wenchen@sjtu.edu.cn). T. Quek (e-mail: tonyquek@sutd.edu.sg) is with Singapore University of Technology and Design, Singapore 487372.}}
	\thanks{This work was supported by the FDCT (under Grant 0119/2020/A3) and the GDST under Grant (2020B1212030003).}
}
%  { \textit{(Corresponding author: Qingqing Wu.)}}

% Shaodan Ma : 0000-0001-5521-3650
\maketitle

%%%%%%%%%%%%%%%%%%%%%%%%%%%%%%%%%%%%%%%%%%%%%%%%%%%%%%%%%%%%%%%%%%%%%%%%%%%%%%%%
\begin{abstract}
Unmanned aerial vehicle (UAV) is expected to bring transformative improvement to the integrated sensing and communication (ISAC) system. However, due to shared spectrum resources, it is challenging to achieve a critical trade-off between these two integrated functionalities. To address this issue, we propose in this paper a new integrated \emph{periodic} sensing and communication mechanism for the UAV-enable ISAC system. Specifically, the user achievable rate is maximized via jointly optimizing UAV trajectory, transmit precoder, and sensing start instant, subject to the sensing frequency and beam pattern gain constraints. Despite that this problem is highly non-convex and involves an infinite number of variables, we obtain the optimal transmit precoder and derive the optimal achievable rate in closed-form for any given UAV location to facilitate the UAV trajectory design. Furthermore, we first prove the structural symmetry between optimal solutions in different ISAC frames without location constraints and then propose a high-quality UAV trajectory and sensing optimization algorithm for the general location-constrained case. Simulation results corroborate the effectiveness of the proposed design and also unveil a more flexible trade-off in ISAC systems over benchmark schemes.

\end{abstract}
\begin{IEEEkeywords}
	Integrated sensing and communication, UAV, periodic sensing, beamforming, trajectory optimization.
\end{IEEEkeywords}
%%%%%%%%%%%%%%%%%%%%%%%%%%%%%%%%%%%%%%%%%%%%%%%%%%%%%%%%%%%%%%%%%%%%%%%%%%%%%%%%
\section{Introduction}
\par
Integrated sensing and communication (ISAC) has emerged as a key technology in future wireless networks\cite{Zhang2021Enabling}. However, due to shared spectrum resources and complicated surrounding scatters, it is very challenging to strike a good balance between two potentially conflicting design objectives: high-quality communication service and high-timeliness sensing requirement. Recently, there are some research efforts devoted to unifying sensing and communication to mutually assist each other \cite{Cui2021Integrating}. For example, appropriate radar beam pattern design with the communication requirement guaranteed and Pareto optimization framework of the radar-communication system were presented in \cite{Liu2018MIMO} and \cite{Chen2021Radar}, respectively.
\par 
Driven by on-demand deployment and strong line-of-sight (LoS) links promised by unmanned aerial vehicles (UAVs) \cite{Wu2018Capacity, Meng2021Space}, UAV is expected to be a promising aerial ISAC platform to provide more controllable and balanced integrated service based on the requirements of sensing frequency and communication quality \cite{wei2021safeguarding}. In \cite{lyu2021joint}, a joint beamforming and UAV trajectory optimization algorithm was proposed to improve communication quality while guaranteeing sensing requirement. However, the prior works on ISAC (e.g., \cite{Liu2018MIMO}, \cite{Chen2021Radar}, \cite{lyu2021joint}) mainly focus on improving the performance when both functionalities working simultaneously during the entire considered period, i.e., sensing is always executed along with the communication. This however, may ignore the practical asymmetric sensing and communication requirements. Specifically, the sensing frequency as another important aspect of ISAC systems is practically determined by the timeliness requirement of specific tasks and has not been taken into account in the literature, e.g., a relatively low/high sensing frequency is preferred for low-speed/high-speed target tracking. As such, always forcing both sensing and communication simultaneously may cause excessive sensing, thereby introducing stronger interference and higher energy consumption. Hence, it is crucial to improve sensing efficiency, especially for power limited UAVs. To address this issue, a more general trade-off between sensing and communication needs to be investigated by taking into account the sensing frequency besides the commonly used sensing power, which thus motivates this work.
\begin{figure}[t]
	\centering
	\setlength{\abovecaptionskip}{0.cm}
	\includegraphics[width=8.1cm]{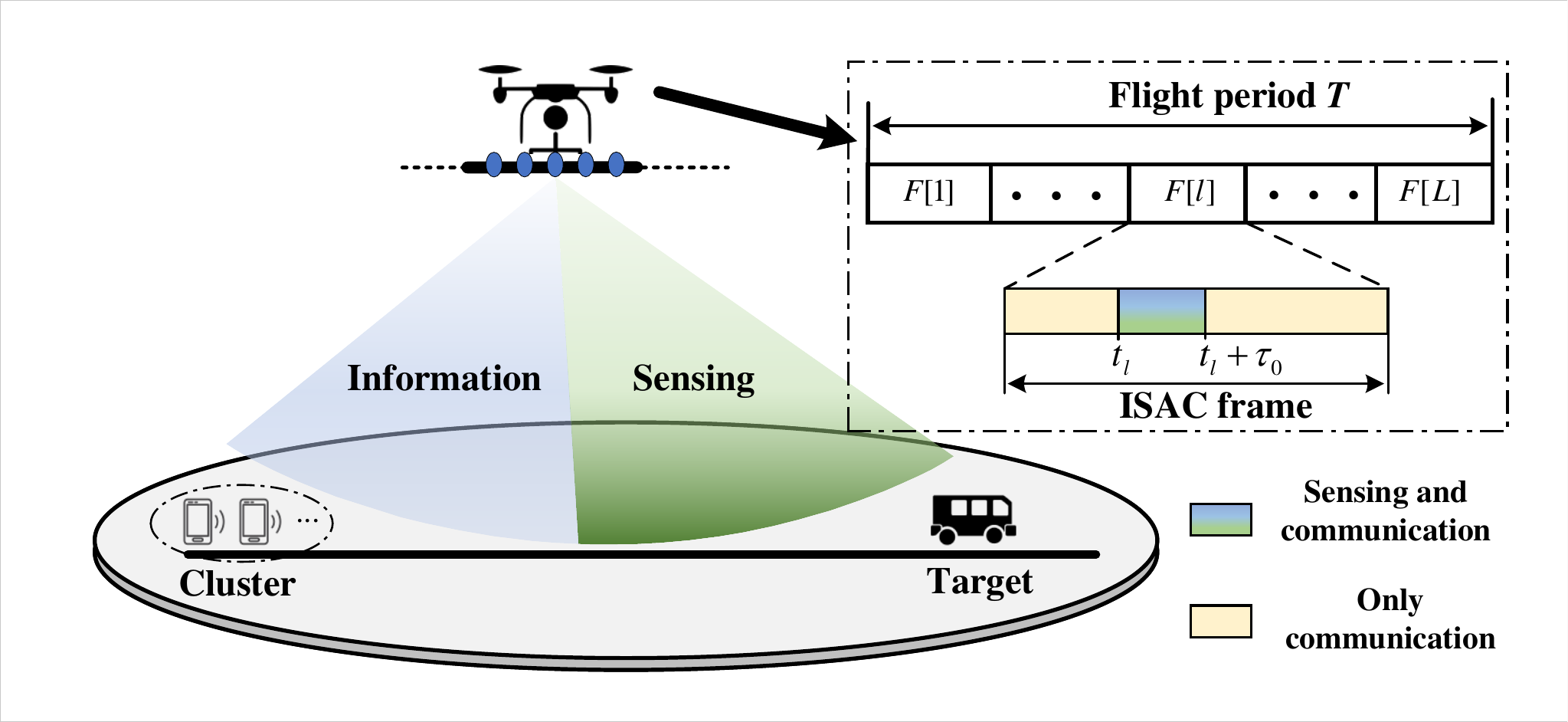}
	\vspace{-2mm}
	\caption{The illustration of integrated periodic sensing and communication.}
	\label{figure1}
	\vspace{-6mm}
\end{figure}
\par 
In this paper, we study a UAV-enabled ISAC system aimed at sensing targets near the ground while providing multi-cast downlink communication for single-antenna users. In particular, to reflect the practically different sensing frequency and sensing power requirements, we propose a periodic ISAC framework where the sensing is periodically executed along with communication. As an initial study, we consider a system model with one target and one cluster of users only, to draw the fundamental insights of UAV-enabled ISAC systems for the considered periodic sensing scenarios, as shown in Fig.~{\ref{figure1}} (e.g., pipeline inspection or traffic monitoring). In this work, the UAV trajectory, transmit beamforming, and sensing instant are jointly optimized to maximize the communication performance while guaranteeing the sensing requirement. Unlike traditional ISAC considered in \cite{lyu2021joint}, which forces data transmitting and radar sensing simultaneously all the time, the proposed scheme offers another opportunity to balance between practical sensing and communication over time. As a result, both standalone communication and always-sensing are special cases of our considered periodic sensing and communication scenarios. The main contribution in this paper is summarized as follows:
\begin{itemize}
	\item To cater for different sensing requirements and avoid excessive energy consumption in practice, we propose an integrated \emph{periodic} sensing and communication (IPSAC) mechanism to provide a more flexible trade-off between sensing and communication over time. We first prove the structural symmetry between optimal solutions in different ISAC frames without location constraints, thereby reducing the algorithm complexity.
	\item We solve the optimal transmit precoder according to the rank-one characteristic and derive the optimal achievable rate in closed-form. Accordingly, an optimal UAV trajectory without location constraints is presented, and a high-quality location-constrained solution is also provided.
\end{itemize}

%%%%%%%%%%%%%%%%%%%%%%%%%%%%%%%%%%%%%%%%%%%%%%%%%%%%%%%%%%%%%%%%%%%%%%%%%%%%%%%
\section{System Model and Problem Formulation}
\vspace{-1mm}
\label{SYSTEM}

As shown in Fig.~\ref{figure1}, we consider a UAV-enabled ISAC system where a UAV equipped with a uniform linear array (ULA) of $M$ antennas is deployed to sense one target near the ground while providing communication service for a cluster of single-antenna users (viewed as one user for ease of analysis, or viewed as an access point for sensory data upload). For convenience, a two-dimensional Cartesian coordinate system is adopted where the linear distance of user and target is referred to as the $x$-axis with two points ${\bm{u}} = [0,0]^T$ and ${\bm{v}} = [{D},0]^T$, respectively. The value of ${\bm{v}}$ is determined based on the specific sensing tasks, which can be set as a sampled position in the region of interest for target detection, or set as a roughly priori-position for target tracking. The UAV is assumed to fly at a constant altitude of $H$ m with a flight duration of $T$ s, and the UAV location is denoted by $x(t)$, where $t \in {\cal{T}} \buildrel \Delta \over = [0,T]$. 

%\subsection{ISAC Frame}

In our proposed IPSAC mechanism, as shown in Fig.~{\ref{figure1}}, the sensing task should be performed at least once in each ISAC frame, and its length is denoted by $T_f$, which is set according to the requirement of task execution frequency. The total frame number $L$ is set as an integer for ease of analysis ($L \cdot T_f=T$), and the index of ISAC frame is denoted by $l \in {\cal{L}}= \{1,\cdots,L\}$. The sensing time in the $l$th ISAC frame is given by ${\cal{T}}_l \buildrel \Delta \over = [t_l, t_l + \tau_0]$, where $\tau_0$ is the sensing period. The minimum sensing period is practically determined by radar bandwidth, waveform, and so on \cite{Cui2021Integrating}. It is assumed that the UAV is hovering during detecting or tracking the target, since the sensing period is generally small and the fixed UAV location can avoid the introduction of complex Doppler shift due to the movement of the UAV.

%\subsection{Communication and Sensing Model}
\par
The communication links between the UAV and the user are assumed to be dominated by the line-of-sight (LoS) component \cite{Wu2018Capacity}. Hence, it is assumed that the aerial-ground channel follows the free-space path loss model, and the channel power gain from the UAV to the user can be given by
	\vspace{-1mm}
\begin{equation}
	\beta_c(x(t))=\beta_{0} d(x(t))^{- 2}=\frac{{{\beta _0}}}{{{H^2} + x{(t)}^2}},
	\vspace{-1mm}
\end{equation}
where $\beta_0$ represents the channel power at the reference distance 1 m. The transmit array response vector of the UAV toward the user's location ${\bm{u}}$ is expressed as
\begin{equation}
	{{\bm{a}}^H}\left(x(t), \bm{u}\right) = [1, \cdots , e^{-j 2 \pi \frac{d}{\lambda} (M-1) \sin (\theta_1 (x(t), \bm{u}))}],
\end{equation}
where $d$ is the half-wavelength antenna spacing, $\lambda$ is the carrier wavelength, $\theta_1$ denotes the elevation angle of the geographical path connecting the UAV to the user, and $\sin (\theta_1 (x(t), \bm{u})) = H / \sqrt{x(t)^2+H^2}$. Therefore, the baseband equivalent channel from the UAV to the user can be given by 
	\vspace{-1mm}
\begin{equation}
	{\bm{h}}^H_c(x(t))=\sqrt{\beta_{{c}}(x(t))} e^{-j \frac{2 \pi d{(x(t))}}{\lambda}} {\bm{a}}^H\left(x(t), \bm{u}\right).
	\vspace{-1mm}
\end{equation}

For practical implementation, the linear precoding is adopted at the UAV-enabled ISAC system. The transmitted signal from the UAV is given by ${\bm{z}}(t) = {{\bm{w}}_c}(t) {s_c}(t)$, where ${s}_c$ and ${\bm{w}}_c(t) \in \mathbb{C}^{M \times 1}$ are the information-bearing signal to the user and its corresponding transmit precoder, ${s}_c \sim \mathcal{C} \mathcal{N}(0,1)$, and $\mathbb{E}(|s_c|^{2}) = 1$. Then, the received signal at user is
\begin{equation}\label{UserReceivedSignal}
	{y}(t) =  {\bm{h}}^H_c(x(t))  {\bm{z}}(t) + {n_c}(t),\forall t \in {\cal{T}},
\end{equation}
where $n_c(t) \sim {\cal{CN}}(0, \sigma^2)$ denotes the additive white Gaussian noise (AWGN) at the user's receiver. Accordingly, the signal-to-noise-ratio (SNR) of the user is given by ${\gamma}(t) = | {{\bm{h}}_c(x(t)) {{\bm{w}}_c}(t)} |^2 / \sigma^2, \forall t \in {\cal{T}}$. As a result, the achievable rate of the user at time $t$ is ${R}(t) = {\log_2 (1 + {\gamma }(t))}$. In our designed ISAC system, the reflected communication signals from the objects of interest can also be analyzed for radar sensing \cite{Cui2021Integrating}, and thus, the communication signals $s_c(t)$ is also exploited for sensing performance. The transmit beam pattern gain from the UAV to the target's location ${\bm{v}}$ can be expressed as
\begin{equation}\label{BeampatternGain}
	\begin{aligned}
		& \Gamma \left(x(t), \bm{v}\right) = E\left[ {{{\left| {\bm{a}}^H(x(t), \bm{v})  {\bm{z}}(t) \right|}^2}} \right] \\
		= & {\bm{a}}^H(x(t), \bm{v})  {   \left( {\bm{w}}_c(t){\bm{w}}^H_c(t)  \right) }   {\bm{a}}(x(t), \bm{v}).
	\end{aligned}
\end{equation}

Our objective is to maximize the achievable rate, subject to the beam pattern gain constraint and the maximum transmit power constraint. The corresponding optimization problem can be formulated as 
\begin{alignat}{2}
\label{P1}
\vspace{-1mm}
(\rm{P1}): \quad & \begin{array}{*{20}{c}}
\mathop {\max }\limits_{ \{{\bm{w}}_c(t)\}, \{x(t)\}, \{t_l\}} \mathop \frac{1}{T} \int_0^T R(t) {\rm{d}}t,
\end{array} & \\ 
\mbox{s.t.}\quad
& \Gamma \left(x(t_l), \bm{v}\right) \ge d(x(t_l), {\bm{v}})^2 {\tilde{\Gamma}}, \forall t_l \in {\cal{T}}_l, l \in {\cal{L}}, & \tag{\ref{P1}a}\\
& \left\|{{\bm{w}}_c}(t)\right\|^2 \le P_{\max}, \forall t \in {\cal{T}}, & \tag{\ref{P1}b}\\
& \left| \dot{x}(t) \right| \le V_{\max}, \forall t \in {\cal{T}}, & \tag{\ref{P1}c}\\
& {{x}}(0) = {{x}}_I,  {{x}}({T}) = {{x}}_F. & \tag{\ref{P1}d}
\vspace{-1mm}
\end{alignat} 
%%%%%%%%%%%%%%%%%%%%%%%%%%%%%%%%%%%%%%%%%%%%%%%%%%%%%%%%%%%%%%%%%%%%%%%%%%%%%%%%
In (P1), (\ref{P1}a) represents the beam pattern gain should be no less than the threshold ${\tilde{\Gamma}}$ under the given sensing frequency and the total power constraint is given in ({\ref{P1}b}). The UAV speed constraint, the initial and final location constraints are given in (\ref{P1}c) and (\ref{P1}d), respectively. Problem (P1) is difficult to solve optimally, since it involves an infinite number of variables $\{x(t)\}$, and the constraints in ({\ref{P1}a}) and objective function are non-convex due to the non-concavity of $\Gamma\left(x(t_l), \bm{v}\right)$ and $R(t)$.

\section{Proposed Solution to (P1)}
\label{ProposedAlgorithm}
In this section, we obtain the optimal transmit precoder by SDR technique together with eigenvalue decomposition, and provide a closed-form user SNR for trajectory optimizing.
\newtheorem{thm}{\bf Lemma}
\newtheorem{remark}{\bf Remark}
\newtheorem{Pro}{\bf Proposition} 
\vspace{-2mm}
\subsection{Optimal Transmit Precoder to (P1)}

\par	
Intuitively, for $t \notin [t_l,t_l+\tau_0]$, $l \in {\cal{L}}$, it is known that the maximum ratio transmission (MRT) is the optimal transmit precoder, i.e., ${\bm{w}}^*_c(t) = {\frac{\sqrt{P_{\max}}{\bm{h}}_c}{\|{\bm{h}}_c\|}}$. In the following, we will focus on analyzing the optimal transmit precoder ${\bm{w}}_c(t)$ for $t \in [t_l,t_l+\tau_0]$, $l \in {\cal{L}}$. Since the maximization of $R(t)$ is equivalent to maximizing the corresponding received signal strength, the $\log$ function is dropped in the objective function for simplicity. Then, for any given UAV location $x(t)$, problem (P1) is reduced to (by dropping the time index $(t)$)
\vspace{-1mm}
\begin{alignat}{2}
\label{P2}
(\rm{P2}): \quad & \begin{array}{*{20}{c}}
	\mathop {\max }\limits_{ {\bm{w}}_c}  \mathop  {{\bm{w}}_c^H} {\bm{h}}_c {\bm{h}}_c^H{\bm{w}}_c,
\end{array} & \\ 
\mbox{s.t.}\quad
& {{\bm{w}}_c^H} {\bm{h}}_r {\bm{h}}_r^H{\bm{w}}_c \ge {\tilde{\Gamma}},  \left\|{{\bm{w}}_c}\right\|^2 \le P_{\max}, & \tag{\ref{P2}a}
\end{alignat} 
where ${\bm{h}}^H_c = {\bm{h}}^H_c(x,{\bm{u}})$, ${\bm{h}}^H_r =\frac{{\bm{a}}^H(x,{\bm{v}})}{d(x, {\bm{v}})}$. Notice that problem (P2) is non-convex and thus is difficult to be optimally solved in general. Hence, we propose to utilize semidefinite relaxation (SDR) technique to solve problem (P2), and then, obtain the optimal transmit precoders based on the rank-one characteristic of optimal solution. Towards this end, we introduce new auxiliary variables ${\bm{W}} = {\bm{w}}{\bm{w}}^H$, where ${\bm{W}} \succeq 0$ and $\operatorname{rank}\left(\bm{W}\right) = 1$. Thus, problem (P2) is equivalent to
\vspace{-1mm}
\begin{alignat}{2}
	\label{P2.1}
	(\rm{P2.1}): \quad & \begin{array}{*{20}{c}}
		\mathop {\max }\limits_{ {\bm{W}}} \mathop {\rm{tr}}\left( {{{\bm{H}}_c}{\bm{W}}} \right),
	\end{array} & \\ 
	\mbox{s.t.}\quad
	& {\rm{tr}}\left( {{{\bm{H}}_r}{\bm{W}}} \right) \ge {\tilde{\Gamma}}, {\rm{tr}}\left( {\bm{W}} \right) \le P_{\max}, & \tag{\ref{P2.1}a} \\
	& {\rm{rank}}\left( {\bm{W}} \right) = 1, & \tag{\ref{P2.1}b} 
	\vspace{-1mm}
\end{alignat} 
where ${{\bm{H}}_c} = {{\bm{h}}_c}{{\bm{h}}^H_c}$ and ${{\bm{H}}_r} = {{\bm{h}}_r}{{\bm{h}}^H_r}$. By ignoring the above rank constraint on ${\bm{W}}$ similarly as in \cite{Xu2014Multiuser}, the semidefinite relaxation (SDR) of problem (P2.1), denoted by (SDR2.1), can be optimally solved via convex optimization solvers. 
\begin{remark}\label{Rank1Illustration}
	For arbitrary user channel and target location, according to Theorem 3.2 in \cite{Huang2010Rank}, problem (SDR2.1) always has an optimal solution $\tilde{\bm{W}}$ satisfying $\operatorname{rank}^2(\tilde {\bm{W}}) \le 2$. Hence, there always exists an optimal solution $\tilde{\bm{W}}$ satisfying $\operatorname{rank}(\tilde{\bm{W}}) = 1$.
\end{remark}

Remark \ref{Rank1Illustration} shows that the optimal transmit precoder ${\bm{w}}_c$ can be recovered by performing eigenvalue decomposition over the obtained rank-one ${\bm{W}}$. However, it is still challenging to obtain the optimal trajectory due to lack of closed-form transmit precoder. To tackle this problem, we provide a closed-form user SNR for trajectory analysis, even if there is no closed-form transmit precoder.
\begin{Pro}\label{OptimalBearfoming}
	For any given UAV location, during sensing time, the optimal user SNR can be given by 
	\vspace{-2mm}
	\begin{equation}\label{OptimalGamma}
		\gamma^* = \left\{ {\begin{array}{*{20}{c}}
				{\frac{\beta_0{P_{\max}}M}{{{x^2} + {H^2}}},}&{\frac{{{M{P_{\max }}\rho^2}}}{{{( {D - x} )}^2} + {H^2}} \ge  {\tilde{\Gamma}}}\\
				{g(x),}&{\rm{Otherwise}}
		\end{array}} \right.,
	\vspace{-2mm}
	\end{equation}
	where $g(x) =  \frac{\gamma_0{({{( {D - x} )}^2} + {H^2}){\left( {\rho  {\sqrt {{{\tilde{\Gamma}}}}} {\rm{ + }}\sqrt { {1 - {\rho ^2}} } {\sqrt {\frac{{M{P_{\max }}}}{{{{( {D - x} )}^2} + {H^2}}}- {{\tilde{\Gamma}}}}}   } \right)^2}}}{{{{x^2} + {H^2}}}}$, $\gamma_0 = \frac{{\beta _0}}{\sigma ^2}$, and $\rho  = \frac{| {\bm{a}}^H(x, {\bm{u}} ) {\bm{a}}(x, {\bm{v}} ) |}{\| {\bm{a}}^H(x, {\bm{u}} ) \|\| {\bm{a}}(x, {\bm{v}} )\| }$.
\end{Pro}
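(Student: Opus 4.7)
The plan is to exploit the rank-one structure guaranteed by Remark~\ref{Rank1Illustration} so as to work directly with a beamforming vector ${\bm{w}}_c$ rather than the lifted matrix ${\bm{W}}$. First, I would observe that any component of ${\bm{w}}_c$ orthogonal to both ${\bm{h}}_c$ and ${\bm{h}}_r$ affects neither the objective nor the sensing constraint yet consumes power, so without loss of optimality ${\bm{w}}_c$ lies in the two-dimensional subspace $\mathrm{span}\{{\bm{h}}_c,{\bm{h}}_r\}$. I would then introduce an orthonormal basis $\{\bar{\bm{h}}_r,\bar{\bm{h}}_r^{\perp}\}$ aligned with the sensing direction and expand $\bar{\bm{h}}_c=\rho e^{j\phi}\bar{\bm{h}}_r+\sqrt{1-\rho^2}\,\bar{\bm{h}}_r^{\perp}$, using the fact that $|\bar{\bm{h}}_c^H\bar{\bm{h}}_r|$ collapses exactly to the $\rho$ defined in the proposition because the scalar phase of ${\bm{h}}_c$ cancels in the modulus. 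Writing ${\bm{w}}_c=\alpha\bar{\bm{h}}_r+\beta\bar{\bm{h}}_r^{\perp}$ recasts the problem as: maximize $\beta_c M\,|\rho e^{-j\phi}\alpha+\sqrt{1-\rho^2}\,\beta|^2$ subject to $(M/d(x,{\bm{v}})^2)|\alpha|^2\ge\tilde\Gamma$ and $|\alpha|^2+|\beta|^2\le P_{\max}$.

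Next, I would align the phases of $\alpha$ and $\beta$ so that the two terms in the objective add constructively, reducing everything to a real-variable problem of maximizing $f(a)=(\rho\sqrt{a}+\sqrt{1-\rho^2}\sqrt{P_{\max}-a})^2$ with $a=|\alpha|^2$. Because enlarging $|\beta|$ strictly improves the objective without affecting the sensing constraint, the power budget must be tight. A one-dimensional derivative (or Cauchy--Schwarz) argument then gives an unconstrained maximizer $a^\star=\rho^2 P_{\max}$ with $f(a^\star)=P_{\max}$, which is exactly the MRT solution and yields the SNR $\beta_0 P_{\max}M/((x^2+H^2)\sigma^2)$. This choice is feasible for the sensing constraint precisely when $a^\star\ge\tilde\Gamma d(x,{\bm{v}})^2/M$, which rearranges to $MP_{\max}\rho^2/((D-x)^2+H^2)\ge\tilde\Gamma$, recovering the first branch of (\ref{OptimalGamma}). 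In the complementary regime the sensing constraint is active, forcing $a=\tilde\Gamma((D-x)^2+H^2)/M$; substituting this $a$ into $f$ and multiplying by $\beta_0/((x^2+H^2)\sigma^2)$ to convert received signal power to SNR produces the closed-form $g(x)$ after a short algebraic simplification.

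The step I expect to be trickiest is justifying the ``subspace reduction $+$ phase alignment $+$ full-power'' chain cleanly, since each of these silently relies on the two constraints being shaped by rank-one matrices ${\bm{h}}_c{\bm{h}}_c^H$ and ${\bm{h}}_r{\bm{h}}_r^H$; without that structure one would have to attack the KKT conditions of (SDR2.1) directly. A smaller but necessary sanity check is that the quantity $\sqrt{MP_{\max}/((D-x)^2+H^2)-\tilde\Gamma}$ inside $g(x)$ is real, which is equivalent to (P2) being feasible at the location $x$ at all, and must hold implicitly whenever the proposition is invoked.
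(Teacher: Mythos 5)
Your proof is correct, and it reaches the stated closed form, but it takes a genuinely different route from the paper. The paper keeps the lifted SDR formulation (SDR2.1): it first argues the power constraint is tight, handles the MRT-feasible regime directly, and then, for the constrained regime, writes the Lagrangian, manipulates the KKT stationarity condition $({\bm{H}}_c-\lambda_1{\bm{I}}+\lambda_2{\bm{H}}_r){\bm{W}}=0$ by multiplying with a matrix built from $[{\bm{h}}_c,{\bm{h}}_r]$ and taking traces, which yields the optimal ${\rm tr}({\bm{H}}_c{\bm{W}}^*)$ after identifying the phase of the cross terms. Your argument instead works in the primal with an explicit rank-one precoder: restrict ${\bm{w}}_c$ to ${\rm span}\{{\bm{h}}_c,{\bm{h}}_r\}$, expand in an orthonormal basis aligned with ${\bm{h}}_r$, align phases, and reduce to the one-dimensional problem $\max_a(\rho\sqrt{a}+\sqrt{1-\rho^2}\sqrt{P_{\max}-a})^2$ subject to $a\ge\tilde\Gamma\,d(x,{\bm{v}})^2/M$, whose unimodality gives either the MRT point $a^\star=\rho^2P_{\max}$ or the boundary point, matching the two branches; substituting the boundary value reproduces $g(x)$ exactly. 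What your route buys is a constructive, self-contained derivation of the optimal precoder that does not need the SDR, the rank-reduction result invoked in Remark~\ref{Rank1Illustration}, or the paper's KKT/trace manipulation (and it makes the feasibility requirement $\frac{MP_{\max}}{(D-x)^2+H^2}\ge\tilde\Gamma$ transparent); what the paper's route buys is consistency with the SDP machinery it actually uses to compute ${\bm{W}}$ numerically. Two small polish points: when invoking the sensing constraint being active you should state explicitly that the scalar objective is decreasing for $a>\rho^2P_{\max}$, which is the step that pins the optimum to the boundary; and note that your first-branch SNR $\beta_0P_{\max}M/((x^2+H^2)\sigma^2)$ carries the $\sigma^2$ that the paper's expression omits, which is consistent with the definition $\gamma_0=\beta_0/\sigma^2$ used in $g(x)$ and appears to be a typo in the paper rather than an error on your side.
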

\begin{proof}
	Please refer to Appendix A.
\end{proof}
\par
Proposition 1 explicitly shows that the achievable rate is determined by the correlation between the user's and the target's transmit response vectors. Accordingly, the optimal achievable rate can be directly obtained for any given UAV location $x(t)$, thereby dramatically reducing the complexity of trajectory design. The details are given as follows.

\vspace{-2mm}
\subsection{Optimal Solution Without Location Constraints}
\label{OptimalContinuesTrajectory}
\vspace{-1mm}
To draw important insights into the optimal trajectory design, we first study a special case of (P1) where the initial and final location constraints are ignored, denoted by (SP1). By plugging the optimal SNR in (\ref{OptimalGamma}) into the user achievable rate $R(t)$, (SP1) can be simplified as
\vspace{-1mm}
\begin{alignat}{2}
	\label{SP2}
	(\rm{SP2}): \quad \quad & \begin{array}{*{20}{c}}
		\mathop {\max }\limits_{\{x(t)\}, \{t_l\}} \quad \mathop  \frac{1}{T} \int_{t=0}^{T} \tilde{R}(t) {\rm{d}}t,
	\end{array} & \mbox{s.t.}\quad (\ref{P1}c),  \nonumber
	\vspace{-2mm}
\end{alignat} 
where $\tilde{R}(t)$ represents the achievable rate with optimal transmit precoder ${\bm{w}}_c^*$.
However, it is still non-trivial to solve the optimal UAV trajectory because there is no closed-form integral of the achievable rate. To address this challenge, the following conclusions are given to significantly reduce the complexity of finding the optimal UAV trajectory.
\par
\begin{remark}\label{UnidirectionalTrajectory}
	According to Lemma 3 in \cite{Wu2018Capacity}, there always exists an optimal UAV trajectory ${x^*(t)}$ for problem (SP2) that is unidirectional in the $l$th ISAC frame, i.e., $x^*(t_1) \le x^*(t_2)$, if $t_1 < t_2$, or $x^*(t_1) \ge x^*(t_2)$, if $t_1 < t_2$, $\forall t_1, t_2 \in {\cal{T}}_l$. Hence, we only need to consider the unidirectional UAV trajectory between $[0, D]$ for problem (SP2).
\end{remark}

\begin{thm}\label{EqualForEachFrame}
	There always exists an optimal UAV trajectory satisfying $x(t_1) = x(t_2)$ if $t_1 + t_2 = n T_f$, where $n$ is even.
\end{thm}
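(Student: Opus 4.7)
The plan is a tile-and-reflect exchange argument that exploits the fact that in (SP2) the initial and final location constraints have been dropped, so the only cross-frame couplings are continuity of $x(t)$ and the velocity bound $|\dot x(t)|\le V_{\max}$; everything else (the per-frame sensing constraint, the instantaneous rate, and the frame length $T_f$) is identical across frames. First I would fix any optimal solution $x^*(t)$ together with sensing instants $\{t_l^*\}$, which by Remark~\ref{UnidirectionalTrajectory} may be assumed unidirectional on each frame, and pick the best-performing frame index $l^*:=\arg\max_{l}\int_{(l-1)T_f}^{lT_f}R(x^*(t))\,dt$. Let $y(s):=x^*((l^*-1)T_f+s)$ for $s\in[0,T_f]$ denote the profile of $x^*$ over that frame.

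Next I would construct a candidate $\tilde{x}$ by tiling $y$ across all $L$ frames with alternating orientation: on odd-indexed frames set $\tilde{x}(t)=y(t-(l-1)T_f)$, and on even-indexed frames set $\tilde{x}(t)=y(lT_f-t)$, i.e., the time-reversed profile. At each frame boundary $lT_f$ the two adjacent pieces take a common value ($y(0)$ or $y(T_f)$), so $\tilde{x}$ is continuous; the velocity bound is inherited almost everywhere since $|\dot{\tilde{x}}|=|\dot{y}|$ on each tile. Choosing new sensing instants $\tilde{t}_l$ at the points in each tile that correspond to $t_{l^*}^*$ in the original frame gives $\tilde{x}(\tilde{t}_l)=x^*(t_{l^*}^*)$, so constraint (\ref{P1}a) is transferred verbatim.

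For optimality, $R(t)$ depends on $\tilde{x}(t)$ only through position and is therefore invariant under time reversal of a tile; each frame contributes exactly $\int_0^{T_f}R(y(s))\,ds$, so by the maximizing choice of $l^*$,
\[
\int_0^T R(\tilde{x}(t))\,dt \;=\; L\!\int_0^{T_f}\!R(y(s))\,ds \;\ge\; \sum_{l=1}^{L}\int_{(l-1)T_f}^{lT_f}\!R(x^*(t))\,dt,
\]
so $\tilde{x}$ is optimal as well. The required symmetry now follows directly from the construction: for $t_1+t_2=nT_f$ with $n$ even, a brief case check on which frames $t_1$ and $t_2$ inhabit (generically of opposite parity, with boundary cases trivial) collapses both $\tilde{x}(t_1)$ and $\tilde{x}(t_2)$ to $y(s)$ for the same $s\in[0,T_f]$ via the identity $s_1+s_2=T_f$. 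The main conceptual point, and what could fail if the formulation were changed, is that time-reversing a tile carries no penalty: this hinges on $R(\cdot)$ being position-dependent and on the velocity constraint being on $|\dot{x}|$ rather than $\dot{x}$, both of which are explicit in (P1), making the reflection step admissible.
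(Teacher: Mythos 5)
Your proposal is correct and follows essentially the same route as the paper: replicate the best-performing frame across all frames with alternating time reversal, note that continuity, the speed bound $|\dot{x}|\le V_{\max}$, and the sensing constraint are preserved while the position-only rate is invariant under reversal, and conclude optimality plus the stated symmetry. Your write-up is merely more explicit (boundary continuity and the $s_1+s_2=T_f$ case check) than the paper's brief construction, but the underlying argument is the same.
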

\begin{proof}
	Without loss generality, we assume that the total transmission rate of the $l$th ISAC frame is the largest, and its corresponding optimal UAV trajectory is denoted by $\{x^*(t)\}_{t = (l-1)  T_f}^{l  T_f}$. We can always construct an optimal UAV trajectory of the $l+1$th ISAC frame by reversing $x^*(t)$ from the timeline, i.e., $\{x^*(t)\}_{t = l T_f}^{(l-1)  T_f}$, and its corresponding sensing time and transmit precoder at the same location are set to be the same as that of the $l$th ISAC frame. Obviously, the total transmission rate of the constructed solution is no less than that of the original solution, and the constraints in (\ref{P1}c) are also satisfied. Thus, the proof is completed.
\end{proof} 

According to Lemma {\ref{EqualForEachFrame}}, the potential symmetric structure characteristics of the optimal UAV trajectories among adjacent frames are proved. For example, the UAV location at time $t_1 = \alpha T_f$ during the first ISAC frame equals to that at time $t_2 = T_f + (1-\alpha)T_f$ during the second ISAC frame, i.e. $t_1$ and $t_2$ are symmetric with respect to $t = T_f$. Therefore, we can fist optimally solve the UAV trajectory in the first ISAC frame, and then, the optimal UAV trajectory in the second ISAC frame can be constructed by reversing the sequence of that within the first frame. Then, the optimal UAV trajectory of other frames can be obtained similarly.

Based on the above discussion, we can construct an optimal UAV trajectory flying from $x_r$ to $x'_r$ for the first ISAC frame, where $x_r \ge x'_r$. Specifically, the UAV first hovers for a duration of $\tau_0$ s at the sensing location $x_r$ and then flies toward user's direction at its maximum speed for $t \in (\tau_0, T_f]$. Then, the sum-rate based on this trajectory is 
\vspace{-1mm}
\begin{equation}\label{TotalCommunicationRate}
	\begin{aligned}
		{C} = & {\tau_0}g\left( {{x_r}} \right) + \frac{1}{V_{\max}}\int_{{x_r}}^{x'_r} {f\left( x \right)} {\rm{d}}x  + \frac{x'_r}{V_{\max}} f(0),
	\end{aligned}
\vspace{-1mm}
\end{equation}
where ${x'_r} = \max ({x_r} - ({T_f} - {\tau_0}){V_{\max }},0)$, $g\left( x \right)$ represents the achievable rate during sensing at location $x$ (c.f. (\ref{OptimalGamma})), and $f\left( x \right) = {\log _2}\left( {1 + \frac{{\gamma _0}}{{{{x^2} + {H^2}}}}} \right)$ is the achievable rate when only transmitting data to user.
%\begin{figure}[htbp]
%	\centering
%	\setlength{\abovecaptionskip}{0.cm}
%	\includegraphics[width=8cm]{figure3.pdf}
%	\caption{The optimal UAV trajectory structure.}
%	\label{figure3}
%\end{figure}
\begin{thm}\label{FlyingAroundorHovering} % 此处可以改变为多段飞行轨迹的表达式
	There always exists an optimal trajectory in the first ISAC frame satisfying the following structure to problem (SP2), i.e., 
	\vspace{-1.5mm}
	\begin{equation}\label{OptimalSubTrajectory}
		x(t) = \left\{ {\begin{array}{*{20}{c}}
				{  {{x^*_r}} ,}&{t \in [0,\tau_0]}\\
				{\max ( {x^*_r - V_{\max}(t-\tau_0),0} ) ,}&{t \in (\tau_0,T_f]}
		\end{array}} \right.,
	\vspace{-1mm}
	\end{equation}
	where ${x^*_r}$ satisfies the following condition:
	\vspace{-1mm}
	\begin{equation}\label{OptimalSensingCondition}
		{\tau_0}V_{\max}g'\left( {{x^*_r}} \right) = f\left( {\max ( {x^*_r - V_{\max}(T_f-\tau_0),0} )} \right) - f\left( {{x^*_r}} \right).
		\vspace{-1mm}
	\end{equation}
\end{thm}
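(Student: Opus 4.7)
The plan is to first reduce the trajectory to a simple shape via monotonicity and exchange arguments, and then derive the first-order condition. I invoke Remark \ref{UnidirectionalTrajectory} to take $x(t)$ unidirectional in the first ISAC frame and, without loss of generality, non-increasing (flying toward the user). Because the UAV hovers while sensing, $x(t) \equiv x_r$ holds on the sensing window $[t_l, t_l+\tau_0]$ for some location $x_r$ and some start time $t_l$. The structural claim \eqref{OptimalSubTrajectory} then decomposes into two sub-claims: (i) outside the sensing window, the UAV flies at the maximum speed $V_{\max}$ toward the user; and (ii) the sensing window is placed at the very start of the frame, so that $t_l = 0$ and $x(0) = x_r$.

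Sub-claim (i) follows from a standard exchange argument: since $f(x)$ is strictly decreasing in $x$, any non-sensing subinterval on which the UAV moves at speed below $V_{\max}$ without already being at $x=0$ can be sped up, and the freed time, used either to reach a smaller $x$ sooner or to extend a hover at $x=0$, strictly increases the communication integral. For sub-claim (ii), I compare the canonical trajectory with $x(0)=x_r$ against any alternative with $x(0)>x_r$ that flies down to $x_r$, senses, and continues. After applying (i) to both, their total rates reduce to $\tau_0 g(x_r) + \frac{1}{V_{\max}}\int_{x(T_f)}^{x(0)} f(x)\,dx$ plus a possible $f(0)$-rate hover term, with the two integration windows sharing the same length $V_{\max}(T_f-\tau_0)$ but the canonical window shifted leftward. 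A case split on whether neither, one, or both trajectories reach $x=0$, combined with monotonicity of $f$ and the pointwise bound $f(0)\ge f(x)$, shows that the canonical rate dominates in every case.

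Once the structural form is established, the total rate equals $C(x_r)$ in \eqref{TotalCommunicationRate}. Differentiating with respect to $x_r$ and applying the fundamental theorem of calculus to the integral term yields $\frac{dC}{dx_r} = \tau_0 g'(x_r) + \frac{1}{V_{\max}}\bigl(f(x_r) - f(x'_r)\bigr)$ when $x'_r > 0$, and the analogous expression with $f(x'_r)$ replaced by $f(0)$ when the UAV reaches the user. Setting the derivative to zero in either regime yields exactly \eqref{OptimalSensingCondition}, with the understanding that $f(x'_r) = f(0)$ on the boundary.

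The main obstacle is the asymmetric boundary case of sub-claim (ii), where the alternative trajectory reaches $x=0$ (and hovers there) while the canonical one does not, or vice versa. Here one must first lower-bound the hover contribution $\bigl(V_{\max}(T_f-\tau_0)-x_r\bigr)f(0)/V_{\max}$ by a shifted integral of $f$ using the pointwise inequality $f(0)\ge f(x)$, and only then invoke monotonicity of $f$ to compare the resulting ranges. The first-order derivation and the reduction to unidirectional trajectories are routine once this exchange reasoning is in hand.
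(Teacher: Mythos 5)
Your proposal is correct and follows essentially the same route as the paper: the paper's proof likewise obtains \eqref{OptimalSensingCondition} as the stationarity condition of the sum-rate \eqref{TotalCommunicationRate} with respect to the sensing location (written there as a finite-difference limit $\Delta C / \Delta x \to 0$ rather than a derivative), with the hover-then-maximum-speed structure taken as given from the discussion preceding the lemma (Remark 2 and Lemma 1). The only difference is that you explicitly justify that structural reduction (maximum-speed flight outside the sensing window and sensing at the frame start) via exchange arguments, including the boundary case where the UAV reaches $x=0$; this is a sound strengthening of what the paper merely asserts, not a departure from its approach.
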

\begin{proof}
	For any given two UAV trajectories with different sensing locations $x_1$ and $x_2$, the sum-rate difference between these two trajectories can be written as
	\vspace{-1.5mm}
	\begin{equation}
		\begin{aligned}
			\Delta{C} &= {\tau_0}g( {{x_2}} ) - {\tau_0}g( {{x_1}} )  +  \frac{1}{V_{\max}}\int_{{x_1}}^{x_2} {f( x )} {\rm{d}}x \\
			&-\frac{1}{{{V_{\max }}}}\int_{{x'_1}}^{{x'_2}} {f(x)} {\rm{d}}x + \left( {\frac{{{x'_2}}}{{{V_{\max }}}} - \frac{{{x'_1}}}{{{V_{\max }}}}} \right)f(0),
		\end{aligned}
	\vspace{-1.5mm}
	\end{equation}
	where ${x'_2} = \max ({x_2} - ({T_f} - {\tau_0}){V_{\max }},0)$ and ${x'_1} = \max ({x_1} - ({T_f} - {\tau_0}){V_{\max }},0)$. Define $\Delta x = x_2 - x_1$, if the optimal sensing location is $x_1$ or $x_2$, it follows that
	\vspace{-1.5mm}
	\begin{equation}
		\mathop {\lim }\limits_{\Delta x  \to 0 } \frac{ \Delta{C}}{\Delta x} =  {\tau_0}g'( {{x_1}} ) - \frac{1}{V_{\max }}( {f( x'_1 ) - f( {{x_1}} )} ) \\
		= 0.
		\vspace{-1.5mm}
	\end{equation}
	Then, the optimal sensing location should satisfy the condition: ${\tau_0}V_{\max}g'\left( {{x_r}} \right) = f\left( {\max \left( {x'^*_r ,0} \right)} \right) - f\left( {{x^*_r}} \right)$, where $x'^*_r \!=\! {x^*_r} \!-\! {V_{\max }}\left( {{T_l} \!-\! \tau_0} \right)$, and thus, completes the proof.
\end{proof}

Based on Lemma \ref{FlyingAroundorHovering}, the optimal UAV trajectory can be obtained by 1-D search within $[0,D]$ together with checking whether the equation in (\ref{OptimalSensingCondition}) holds. Particularly, if $T_f = \tau_0 $, the optimal trajectory is hovering at the location where $g'\left( {{x_r}} \right) = 0$. As a result, the achievable rate of the optimal UAV trajectory obtained by Lemma \ref{FlyingAroundorHovering} is the upper bound of that for (P1).

\vspace{-1.5mm}
\subsection{Location Constrained Trajectory and Sensing Optimization}
\label{TrajectoryConstruction}
\vspace{-0.5mm}
Mathematically, with initial and final location constraints, the UAV trajectory and sensing start instant optimization sub-problem of (P1) is reduced to
\vspace{-1mm}
\begin{alignat}{2}
	\label{P3}
	(\rm{P3}): \quad & \begin{array}{*{20}{c}}
		\mathop {\max }\limits_{{\{{{x}(t)}\}, \{t_l\}}} \quad \mathop  \frac{1}{T} \int_{t=0}^{T} \tilde{R}(t) {\rm{d}t},
	\end{array} & \mbox{s.t.}\quad (\ref{P1}c)-(\ref{P1}d). \nonumber
\end{alignat} 
\vspace{-0.5mm}
The optimal sensing locations and sensing durations of all ISAC frames are denoted by $\{x^*_{l}\}_{l=1}^L$ and $\{[t^*_{l}, t^*_{l}+\tau_0]\}_{l=1}^L$. We observe that if two consecutive sensing locations on the same side of user, i.e., $x_{l}^*  x_{l+1}^* > 0$, the UAV must fly at its maximum speed from ${x_{l}^*}$ to ${x_{l+1}^*}$. Otherwise, the UAV will hover at the above of user except flying from $x_{l}^*$ to $x_{l+1}^*$ at its maximum speed.

\begin{thm}\label{MultipleSensingLocation}
	The optimal sensing location $x^*_{l}$ and sensing durations $[t^*_{l}, t^*_{l}+\tau_0]$ of the $l$th ISAC frame must satisfy at least one of the following conditions:
	\begin{itemize}
		\item $g'(x^*_{l}) = 0$ (c.f. the definition of $g(x)$ in (\ref{OptimalGamma}));
		\item $t^*_{l} = (l-1)  T_f$ or $t^*_{l} = l  T_f$.
	\end{itemize}
\end{thm}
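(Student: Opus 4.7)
The plan is to argue by cases on whether the optimal sensing interval $[t^*_l, t^*_l+\tau_0]$ touches a time boundary of the $l$th ISAC frame. If $t^*_l = (l-1)T_f$ or $t^*_l+\tau_0 = l T_f$ (matching the second bullet of the lemma, interpreting the paper's ``$t^*_l = l T_f$'' as the right-endpoint hit for the sensing interval), the second condition holds by construction and nothing else is needed. Otherwise, I will establish $g'(x^*_l) = 0$ by a first-order variational argument, treating the problem as a finite-dimensional optimization over $\{x^*_l,t^*_l\}$ since, per the observation stated just before the lemma, the non-sensing portion of the trajectory is already pinned down (max-speed flight on same-side neighbors, hover-over-user otherwise).

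In the interior case, because $t^*_l$ is strictly inside $[(l-1)T_f,\,l T_f - \tau_0]$, there is strictly positive slack both before and after the sensing interval in the $l$th frame. I would consider the joint perturbation $(x^*_l,\,t^*_l) \to (x^*_l + \epsilon,\,t^*_l + \epsilon/V_{\max})$ for small $\epsilon$, paired with a feasibility-preserving reparametrization of the non-sensing portion of the $l$th frame: the extra $\epsilon/V_{\max}$ of pre-sensing flight needed to reach $x^*_l + \epsilon$ is absorbed by the pre-sensing slack, and the post-sensing segment starts from $x^*_l+\epsilon$ at the shifted time and still reaches the frame-$l+1$ entry point by $l T_f$, again using post-sensing slack (or, in the opposite-side subcase noted just before the lemma, the hover interval above the user). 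For $|\epsilon|$ small enough this perturbation remains admissible and leaves every other frame untouched. The first-order change in the $l$th-frame rate then comes only from the sensing integral: the additional pre-sensing flight of length $\epsilon$ runs at instantaneous rate $f(x^*_l)+O(\epsilon)$ and is compensated to the same order by the reduction of hover-near-$x^*_l$ on the post-sensing side, so these two cancel to $o(\epsilon)$; only $\tau_0 g'(x^*_l)\,\epsilon$ survives, and since the sign of $\epsilon$ is free, optimality forces $g'(x^*_l) = 0$.

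The main obstacle I expect is the careful case analysis of the non-sensing trajectory structure parsed just above the lemma. When $x^*_{l-1},\,x^*_l,\,x^*_{l+1}$ all lie on the same side of the user, the max-speed-only flight between consecutive sensings removes the slack, and the naive in-frame perturbation is infeasible by itself; one must then either exhibit nonzero post-sensing slack (whence the argument above applies) or conclude that a time boundary has been hit and Case~1 applies. When neighboring sensings lie on opposite sides, hover-over-user slack is available, but the cancellation now mixes $f(x^*_l)$ and $f(0)$; closing it requires a symmetric two-sided perturbation or an auxiliary shift in $t^*_{l\pm 1}$ that eliminates the residual $f(0)$-term. This bookkeeping across subcases, rather than any single calculation, is where the real work lies.
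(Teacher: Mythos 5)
There is a genuine gap, and it sits exactly at the step you yourself flag as ``where the real work lies.'' Your interior-case argument hinges on the claim that the first-order communication terms cancel: the extra $\epsilon/V_{\max}$ of pre-sensing flight at rate $\approx f(x^*_l)$ is ``compensated by the reduction of hover-near-$x^*_l$ on the post-sensing side.'' But at an optimal solution the UAV never spends slack hovering near $x^*_l\neq 0$; as the observation immediately preceding the lemma states, slack is spent hovering above the user (or in max-speed flight that dips toward the user), because $f$ is maximized at $x=0$. In those configurations your perturbation trades time at rate $f(x^*_l)$ against time at rate $f(0)$ (or $f$ at the path's closest approach), so the first-order change is $\tau_0 g'(x^*_l)\epsilon + \frac{2\epsilon}{V_{\max}}\left(f(x^*_l)-f(0)\right)+o(\epsilon)$ rather than $\tau_0 g'(x^*_l)\epsilon+o(\epsilon)$; stationarity then gives a condition of the form $\tau_0 V_{\max} g'(x^*_l)=2\left(f(0)-f(x^*_l)\right)\neq 0$ (structurally akin to (\ref{OptimalSensingCondition})), not $g'(x^*_l)=0$. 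The ``symmetric two-sided perturbation or auxiliary shift in $t^*_{l\pm1}$'' you invoke to kill this residual is never constructed, so the central conclusion of the interior case is not established.

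The paper avoids this trap by perturbing a different degree of freedom: it compares two trajectories that differ only in \emph{where along the frame's flight path} the $\tau_0$ pause is inserted, keeping the spatial path (and hence the entire non-sensing rate integral) fixed, so the sum-rate difference is exactly $\tau_0\left(g(x_2)-g(x_1)\right)$ with no communication residual to cancel. Sliding the pause toward increasing $g$ is then blocked only by the frame's time budget under the speed constraint, which pins the sensing interval to the frame start or end --- yielding precisely the stated dichotomy. Your perturbation instead lengthens the path by $\epsilon$, which is exactly what drags $f(0)$ and $f(x^*_l)$ into the first-order balance. Your boundary case (and the reading of ``$t^*_l=lT_f$'' as the sensing interval ending at the frame boundary) is fine and matches the paper; the interior case needs to be redone with the pause-sliding perturbation, or else the residual terms must be eliminated explicitly, which you have not done.
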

\begin{proof}
	For any given two sensing locations in the $l$th ISAC frame, the sum-rate difference between these two trajectories is $\tau_0 g(x_{l})$. Hence, if $g'(x^*_{l}) \ne 0$, the sum-rate can be improved by moving the $x^*_{l}$ toward the direction where $g'(x^*_{l}) > 0$. Due to the maximum speed constraints, there may not exist feasible solution satisfying $g'(x^*_{l}) = 0$ within each ISAC frame. In this case, the optimal sensing start instant must be at the start or the end of this ISAC frame, i.e., $t^*_{l} = (l-1)  T_f$ or $t^*_{l} = l  T_f$. Thus, the proof is completed.
\end{proof}

Based on above conclusions, a high-quality solution can be constructed based on the optimal sensing conditions of Lemma {\ref{MultipleSensingLocation}}. Specifically, the UAV will tend to fly at its maximum speed toward location $x_r^*$ in Lemma \ref{FlyingAroundorHovering}, during which, the sensing locations $\{x_{l}\}_{l=1}^{L}$ can be obtained by 1-D search together with checking the conditions in Lemma \ref{MultipleSensingLocation}. After arriving at the location $x_r^*$ , the trajectory is composed of several sub-trajectories with a similar hover-fly-hover structure expressed in (\ref{OptimalSubTrajectory}). Then, the trajectory back to final location can be obtained in a similar way as leaving from initial location. If $| x_I  - x^*_{r}| + | x_F - x^*_r | < L ({T_f - \tau_0})   V_{\max}$, the UAV will fly at its maximum speed from $x_I$ toward location $x_r^*$ and back to $x_F$ at the time ${T \mathord{\left/{\vphantom {T 2}} \right.\kern-\nulldelimiterspace} 2}$.
\vspace{-1mm}
\section{Numerical Results}
\label{Simulations}
\vspace{-1mm}
\par 
In this section, numerical results are provided for characterizing the performance of the proposed IPSAC mechanism. The required parameters are set as follows unless specified otherwise: $P_{\max} = 0.1$ W, $\beta_{0} = -30$ dB, $\sigma^2 = -100$ dB, $\tilde{\Gamma} = 6 \times e ^{-5}$, $T = 500$ s, $T_f = 5$ s,  $\tau_0 = 0.1 $ s, $M = 10$, $d={\lambda  \mathord{\left/{\vphantom {\lambda  2}} \right.\kern-\nulldelimiterspace} 2}$, $V_{\max} = 30$ m/s, $H = 50$ m, $D = 400$ m, and $x_I = x_F = D$. Our proposed schemes for problem (SP2) and (P3) denoted as 1): "Upper bound" and 2): "Proposed" in Fig.~2, respectively, are compared to two benchmarks. 3): {{"Time division"}}: The transmit precoder is set as the MRT to the user and target in a time-division manner; 4): {{"Optimal precoder only"}}: The transmit precoder is obtained by our proposed method. \textbf{Both benchmarks} assume that the UAV performs sensing tasks at the beginning of each ISAC frame, and flies from the initial location to the user during the first-half ISAC frame and then flies to the target during the second half-frame, respectively. 
\par
Figs.~\ref{figure3a}-\ref{figure3b} illustrate the fundamental trade-off among sensing power requirement, sensing frequency, and achievable rate for the considered system. It is observed from Fig.~\ref{figure3a} that the achievable rate gain achieved by our proposed scheme over the "optimal precoder only" scheme increases as the sensing frequency decreases, as the UAV has more non-sensing time to adjust its trajectory for communication performance improvement. Moreover, the achievable rate of our proposed scheme under high sensing frequency achieves significant improvement as compared to that of the "Time division" scheme, since more sensing time can be fully utilized for data transmission. In particular, the performance of our proposed method is very close to the upper bound of (P1), which is obtained by solving (SP2) based on Lemma 2, thereby illustrating the near-optimality of the proposed solution. Besides, when the sensing frequency increases, the achievable rate with a higher beam pattern gain threshold ${\tilde{\Gamma}}$ degrades faster as compared to that with a lower threshold. The main reason is that a higher beam pattern gain threshold forces the UAV to perform sensing tasks at a location closer to the target, thereby resulting in increasing path loss within pure communication duration. Similarly, it can be seen from Fig.~\ref{figure3b} that as the beam pattern gain threshold ${\tilde{\Gamma}}$ increases, the achievable rate under higher sensing frequency degrades faster as compared to that with a higher one. In Fig.~\ref{figure3c}, it is shown that for our proposed scheme, the achievable rates under different ${\tilde{\Gamma}}$ are almost equal when the distance $D$ is less than 100 m, i.e., a better communication performance can be achieved when sensing the target closer to the user (receiver). In Fig.~{\ref{figure4}}, the achievable rate gain achieved by our proposed scheme over the "optimal precoder only" scheme increases as the maximum flight speed $V_{\max}$ increases, since the UAV could obtain better channel gain within a shorter flying time.
\begin{figure*}[htbp]
	\centering
	\vspace{-3mm}
	\subfigure[Achievable rate versus sensing frequency.]{
		\label{figure3a}
		\vspace{-3mm}
		\includegraphics[width=5.2cm]{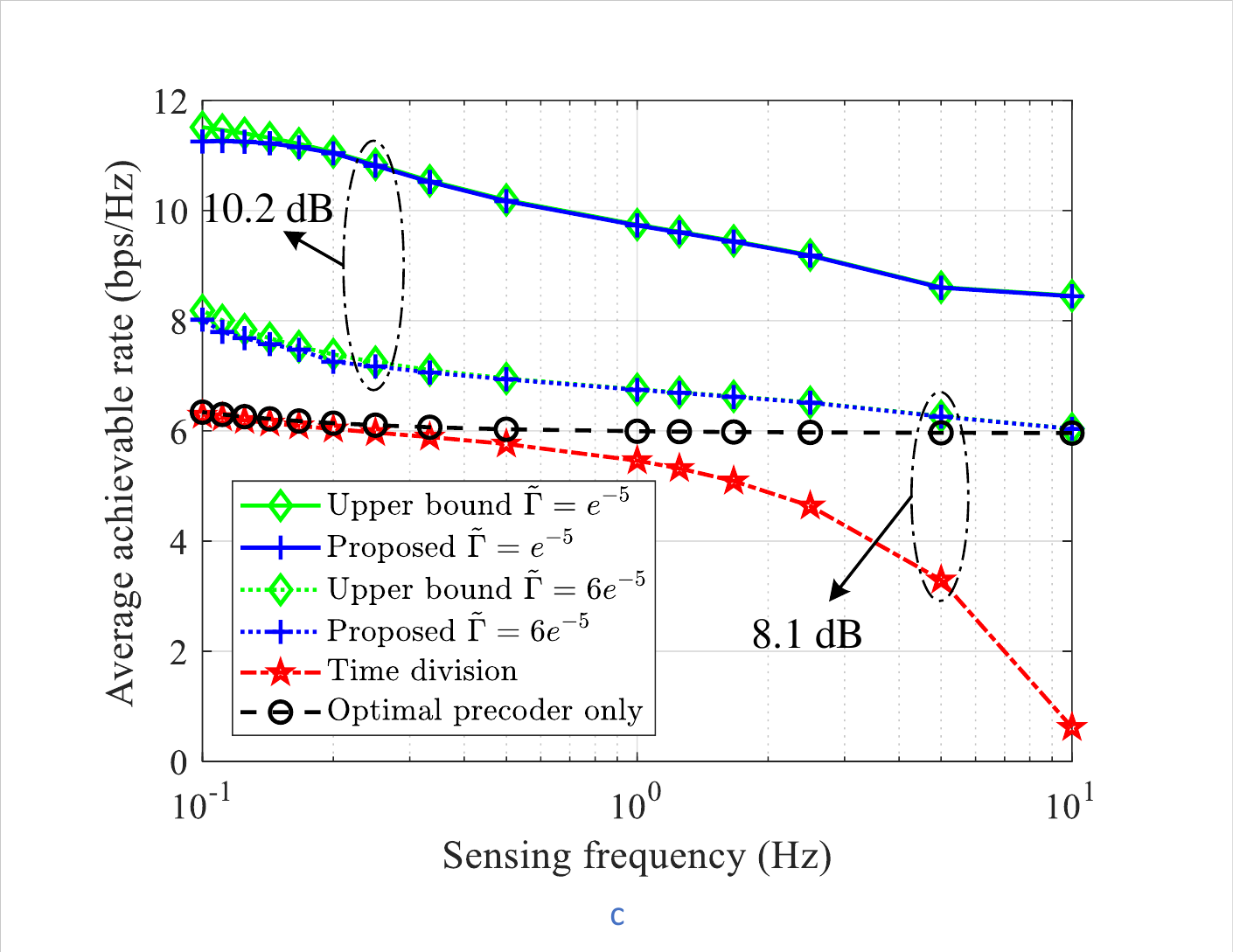}
	}
	\subfigure[Achievable rate versus beam pattern gain.]{
		\label{figure3b}
		\includegraphics[width=5.2cm]{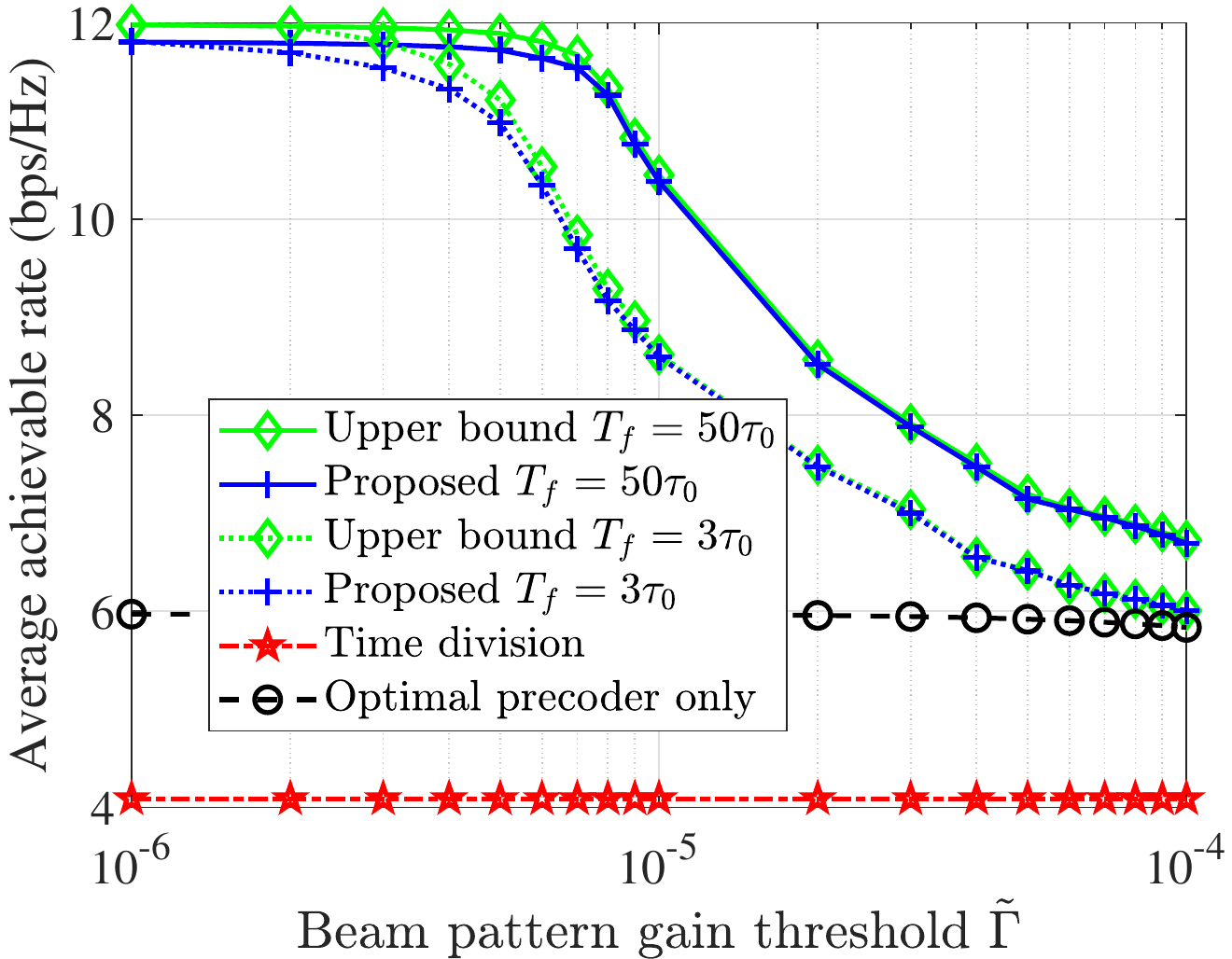}
	}
	\subfigure[Achievable rate versus distance.]{
		\label{figure3c}
		\includegraphics[width=5.2cm]{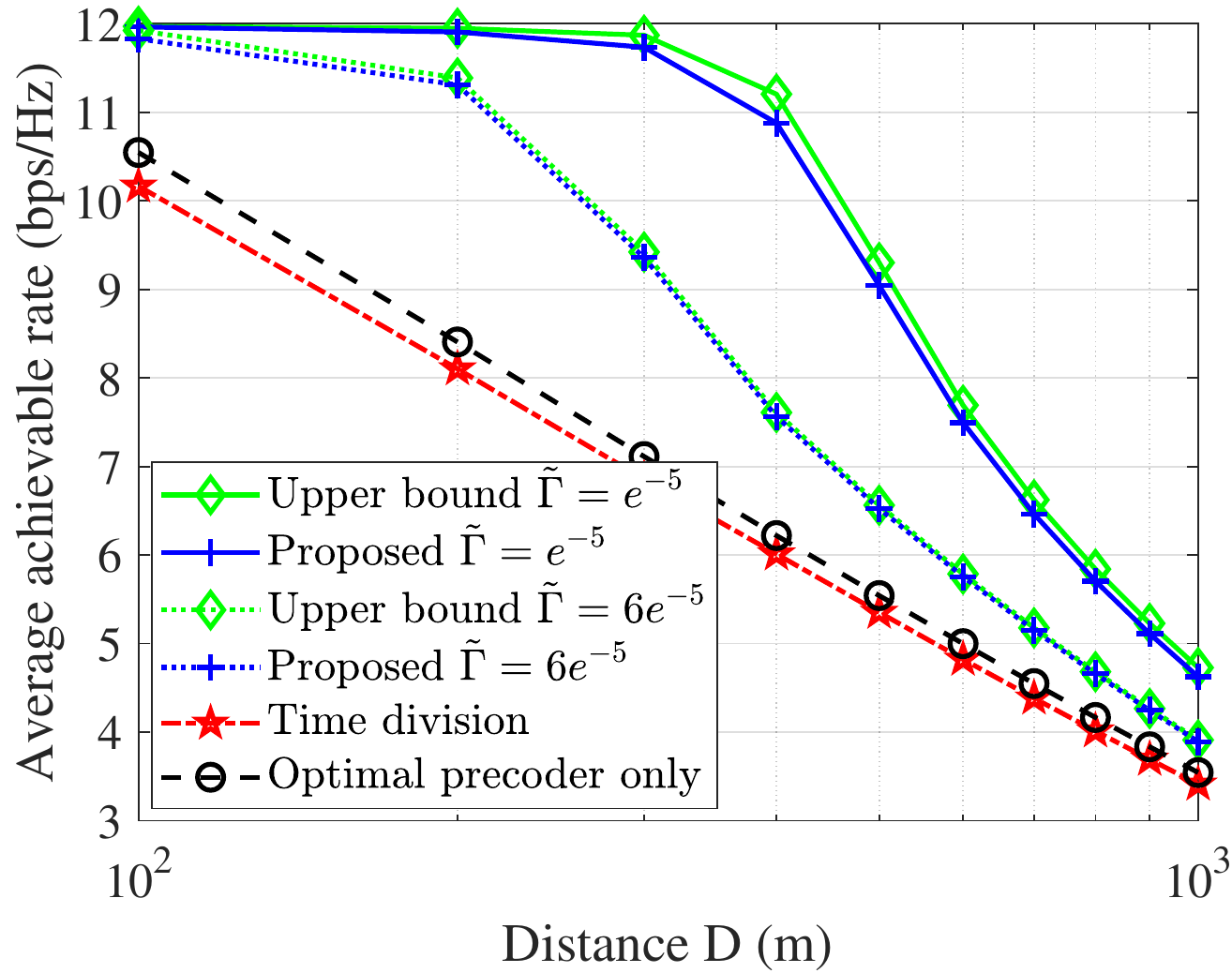}
	}
	\vspace{-3mm}
	\caption {Performance comparison among sensing power requirement, sensing frequency, and achievable rate (sensing frequency is defined as ${1 \mathord{\left/{\vphantom {1 {{T_f}}}} \right.\kern-\nulldelimiterspace} {{T_f}}}$).}
	\label{figure6}
	\vspace{-5mm}
\end{figure*}
\begin{figure}[t]
	\centering
	\setlength{\abovecaptionskip}{0.cm}
	\includegraphics[width=5cm]{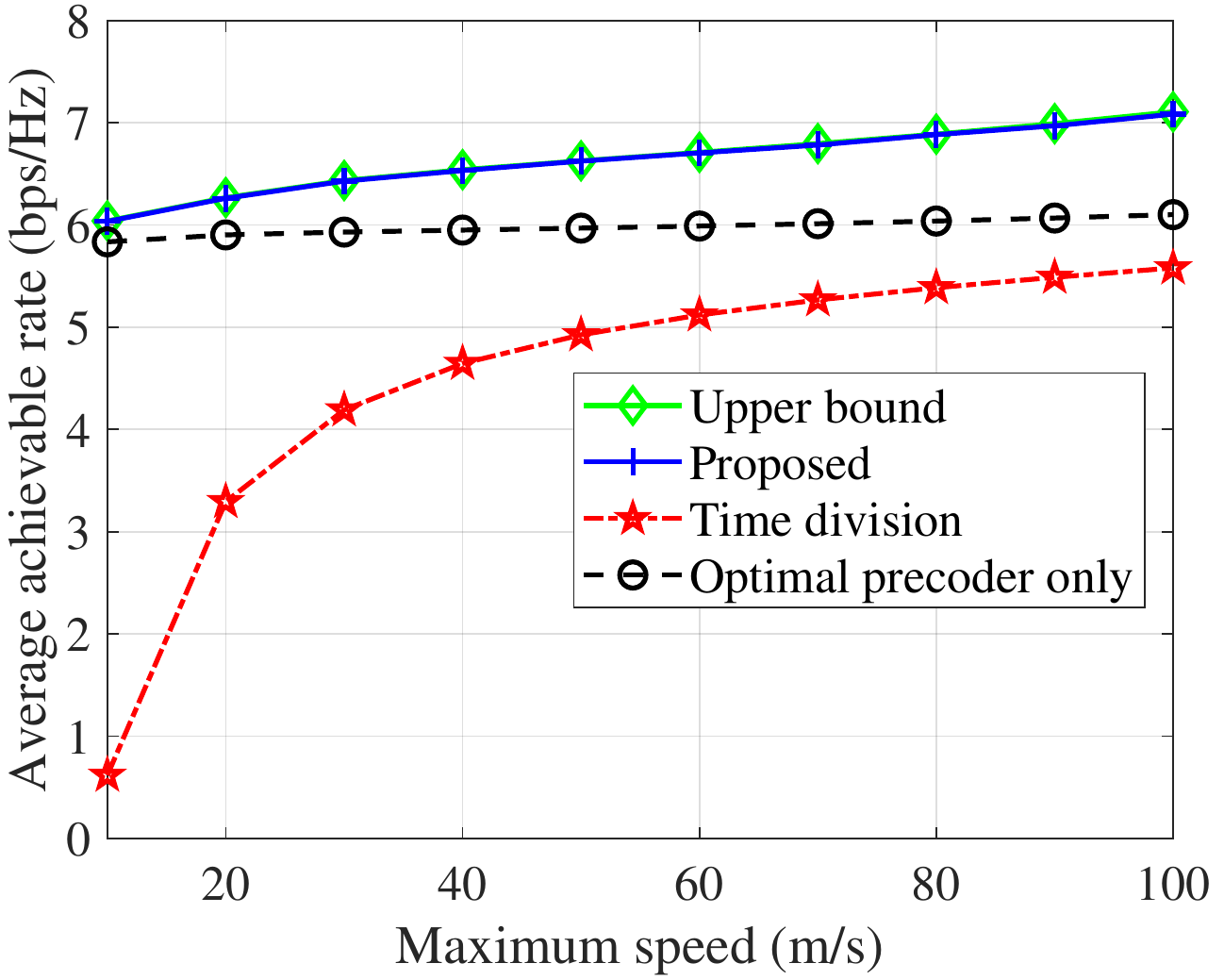}
	\vspace{-2mm}
	\caption{Achievable rate versus maximum flight speed.}
	\label{figure4}
	\vspace{-6mm}
\end{figure}

\vspace{-1.5mm}
\section{Conclusion and Future Works}
\vspace{-1mm}
This letter studied the achievable rate maximization problem in a UAV-enabled ISAC system. In particular, it was shown that there exists a novel structure-symmetry characteristic between optimal solutions in different ISAC frames without location constraints. Furthermore, a near-optimal location-constrained solution was achieved based on the derived closed-form achievable rate. Numerical results verified that the proposed scheme is able to drastically enlarge sensing and communication performance trade-off of UAV-enable ISAC systems. The more general 3D UAV trajectory optimization problems for multi-UAV ISAC scenarios are worthwhile future works.

\normalsize
\vspace{-1mm}
\section*{Appendix A:  \textsc{Proof of Proposition 1}}
\label{ProveP2}
\vspace{-1mm}
First, it can be easily shown that constraint (\ref{P2.1}b) is met with equality for the optimal solution since otherwise $\|{\bm{w}}_c\|$ can be always increased to improve the objective value until (\ref{P2.1}b) becomes active. Hence, constraint (\ref{P2.1}b) can be rewritten as ${\rm{tr}}\left( {\bm{W}} \right) = P_{\max}$. For $\frac{{{M{P_{\max }}\rho^2}}}{{{\left( {D - x} \right)}^2} + {H^2}} \ge  {\tilde{\Gamma}}$, We can readily derive that the beam pattern gain at target will be no less than the threshold ${\tilde{\Gamma}}$ if the optimal transmit precoder is $\sqrt{P_{\max}}\frac{{\bm{h}}_c}{\left\|{\bm{h}}_c\right\|}$, and its corresponding user SINR is $\frac{\beta_0{P_{\max}}M}{{{{x^2} + {H^2}}}}$. For $ \frac{{{M{P_{\max }}\rho^2}}}{{{\left( {D - x} \right)}^2} + {H^2}} <  {\tilde{\Gamma}}$, the Lagrangian function of (SDR2.1) is
\vspace{-1mm}
\begin{equation}
	\begin{aligned}
	L({\bm{W}},{\lambda _1},{\lambda _2} ) =&   -{\rm{tr}}( {{{\bm{H}}_c}{\bm{W}}} )  + {\lambda _1}( {{\rm{tr}}( {\bm{W}} ) - {P_{\max }}} ) \\
	& + {\lambda _2}( {{\tilde{\Gamma}}} - {\rm{tr}}({{{\bm{H}}_r}{\bm{W}}}) ).
	\end{aligned}
\vspace{-1mm}
\end{equation}
According to the Karush-Kuhn-Tucker (KKT) conditions, the optimal solution should satisfy 
\vspace{-1mm}
\begin{equation}\label{KKTcondition}
	 \left({\bm{H}}_c - {\lambda _1}{\bm{I}} + {\lambda _2}{\bm{H}}_r\right){\bm{W}} = 0,
	 \vspace{-1mm}
\end{equation}
which can be rewritten as
\vspace{-2mm}
\begin{equation}\label{TransformEquation}
	\left[ {\begin{array}{*{20}{c}}
			{{\bm{h}}_c}&{{{\bm{h}}_r}}
	\end{array}} \right]\left[ {\begin{array}{*{20}{c}}
			{{\bm{h}}_c^H{\bm{W}}}\\
			{{\lambda _2}{\bm{h}}_r^H{\bm{W}}}
	\end{array}} \right] = {\lambda _1}{\bm{W}}.
\vspace{-2mm}
\end{equation}
$\lambda_2$ is nonzero only when ${\rm{tr}}({{{\bm{H}}_r}{\bm{W}}}) < \tilde{\Gamma}$. In this case, MRT is the optimal solution to (P2). Define ${\bm{H}} = [{\bm{h}}_c, {\bm{h}}_r]$. By multiplying both sides of equation (\ref{TransformEquation}) with $\left[ {\begin{array}{*{20}{c}}
		{{{\bm{h}}_c}}&{\bm{0}}^{M\times1}\\
		{\bm{0}}^{M\times1}&{{{\bm{h}}_r}}
\end{array}} \right]({{\bm{H}}^H{\bm{H}}})^{-1}{\bm{H}}^H$ and taking the trace of both sides, it follows that
\vspace{-2mm}
\begin{equation}\label{EquationTrace}
 {\begin{array}{*{20}{c}}
			{{\rm{tr}}( {{\bm{H}}_c^H{\bm{W}}} ) \!=\! {\lambda _1}\frac{{{{| {{\bm{h}}_r^H} |}^2}{\rm{tr}}( {{{\bm{H}}_c}{\bm{W}}} ) - {\bm{h}}_c^H{{\bm{h}}_r}{\rm{tr}}( {{{\bm{h}}_c}{\bm{h}}_r^H{\bm{W}}} )}}{{{{| {{{\bm{h}}_c}} |}^2}{{| {{{\bm{h}}_r}} |}^2} - {{| {{\bm{h}}_c^H{{\bm{h}}_r}} |}^2}}}}\\
			{{\lambda _2}{\rm{tr}}( {{\bm{H}}_r^H{\bm{W}}} ) \!=\! {\lambda _1}\frac{{{{| {{\bm{h}}_c^H} |}^2}{\rm{tr}}( {{{\bm{H}}_r}{\bm{W}}} ) - {\bm{h}}_r^H{{\bm{h}}_c}{\rm{tr}}( {{{\bm{h}}_r}{\bm{h}}_c^H{\bm{W}}} )}}{{{{| {{{\bm{h}}_c}} |}^2}{{| {{{\bm{h}}_r}} |}^2} - {{| {{\bm{h}}_c^H{{\bm{h}}_r}} |}^2}}}}
	\end{array}} .
\vspace{-1mm}
\end{equation}
Since $\lambda_1$ and $\lambda_2$ are real-valued, ${\bm{h}}_c^H{{\bm{h}}_r}{\rm{tr}}( {{{\bm{h}}_c}{\bm{h}}_r^H{\bm{W}}} )$ and ${\bm{h}}_r^H{{\bm{h}}_c}{\rm{tr}}( {{{\bm{h}}_r}{\bm{h}}_c^H{\bm{W}}} )$ equal to $ {| {{\bm{h}}_c^H{{\bm{h}}_r}} |\sqrt {{\rm{tr}}( {{\bm{H}}_c^H{\bm{W}}} ){\rm{tr}}( {{{\bm{H}}_r^H}{\bm{W}}} )} }$ or $ -{| {{\bm{h}}_c^H{{\bm{h}}_r}} |\sqrt {{\rm{tr}}( {{\bm{H}}_c^H{\bm{W}}} ){\rm{tr}}( {{{\bm{H}}_r^H}{\bm{W}}} )} }$. Accordingly, by plugging equations in (\ref{EquationTrace}) into (\ref{KKTcondition}), it is readily derived that for any given nonzero $\lambda_1$ and $\lambda_2$, the optimal SNR at user ${\rm{tr}}\left( {{\bm{H}}_c^H{\bm{W}}^*} \right) = 	\frac{\gamma_0{({{( {D \!-\! x} )}^2} \!+\! {H^2}){\left( {\rho  {\sqrt {{{\tilde{\Gamma}}}}} {{ \!+\! }}\sqrt { {1 \!-\! {\rho ^2}} } {\sqrt {\frac{{M{P_{\max }}}}{{{{( {D \!-\! x} )}^2} \!+\! {H^2}}}\!-\! {{\tilde{\Gamma}}}}}   } \right)^2}}}{{{x^2} + {H^2}}}$, where $\rho  = \frac{| {\bm{a}}^H(x, {\bm{u}} ) {\bm{a}}(x, {\bm{v}} ) |}{\| {\bm{a}}^H(x, {\bm{u}} ) \|\| {\bm{a}}(x, {\bm{v}} )\| }$. Combining the above results yields the proof.
\footnotesize  	
\vspace{-1mm}
\bibliography{mybibfile}
\bibliographystyle{IEEEtran}

\end{document}